\title{Fixed Points of Completely Positive Trace-Preserving Maps in Infinite Dimension}
\author{
Roderich Tumulka\footnote{Fachbereich Mathematik, Eberhard-Karls-Universit\"at T\"ubingen, Auf der Morgenstelle 10, 72076 T\"ubingen, Germany. E-mail: roderich.tumulka@uni-tuebingen.de } \, and 
Jonte Weixler\footnote{Fachbereich Mathematik, Eberhard-Karls-Universit\"at T\"ubingen, Auf der Morgenstelle 10, 72076 T\"ubingen, Germany. E-mail: jonte.weixler@student.uni-tuebingen.de}}
\date{November 20, 2024}
\newcommand{\be}{\begin{equation}}
	\newcommand{\ee}{\end{equation}}
\newcommand{\Hilbert}{\mathscr{H}}
\theoremstyle{plain}
\newtheorem{thm}{Theorem}
\theoremstyle{definition}
\newtheorem{lem}{Lemma}
\newtheorem{remark}{Remark}
\newtheorem{Corollary}{Corollary}
\begin{document}

\maketitle
\begin{abstract}
Completely positive trace-preserving maps $S$, also known as quantum channels, arise in quantum physics as a description of how the density operator $\rho$ of a system changes in a given time interval, allowing not only for unitary evolution but arbitrary operations including measurements or other interaction with an environment. It is known that if the Hilbert space $\Hilbert$ that $\rho$ acts on is finite-dimensional, then every $S$ must have a fixed point, i.e., a density operator $\rho_0$ with $S(\rho_0)=\rho_0$. In infinite dimension, $S$ need not have a fixed point in general. However, we prove here the existence of a fixed point under a certain additional assumption which is, roughly speaking, that $S$ leaves invariant a certain set of density operators with bounded ``cost'' of preparation. The proof is an application of the Schauder-Tychonoff fixed point theorem. Our motivation for this question comes from a proposal of Deutsch for how to define quantum theory in a space-time with closed timelike curves; our result supports the viability of Deutsch's proposal.	
\medskip
		
\noindent Key words: Quantum channels; Retrocausality; Closed timelike curves; Open Quantum Systems; Discrete Time.
\end{abstract}

\section{Introduction}
\label{Introduction}
We present here a sufficient condition for the existence of fixed points of completely positive trace-preserving maps. Completely positive trace-preserving (CPTP) maps, also known as quantum channels, arise in quantum theory as super-operators, i.e., mapping density operators to density operators rather than state vectors to state vectors. They represent the time evolution, say from time $0$ to time $T$, also in cases in which the time evolution is not unitary, such as cases given by a Lindblad equation, or involving quantum measurements, or open systems in general.

Our motivation for the present paper comes from the discussion of whether quantum theory can be defined in a relativistic background space-time with closed timelike curves (CTCs), specifically from a proposal by David Deutsch \cite{deutsch_quantum_1991} in which a CPTP map is associated with going around a CTC once, and the existence of a fixed point amounts to the consistency of the theory. We discuss Deutsch's proposal and the application of our result to it in Section \ref{section3}.

Quantum channels are also considered in quantum information theory, in particular as models of noisy systems, where they are typically assumed to act on finite dimensional Hilbert spaces. However, also there it may be of interest to consider an infinite-dimensional Hilbert space, either if it is the fundamentally correct one or if it provides a model that is simpler in some other respect (e.g., by obeying symmetries or being a bosonic Fock space).

Our result, formulated precisely as Theorem \ref{thm1} in Section \ref{section2}, provides a sufficient condition for the existence of a fixed point of a CPTP map $S$. The condition is expressed in terms of several commuting positive self-adjoint operators $A_1,\ldots, A_m$; for these, we have in mind that states with higher values of these observables are more ``costly'' 
to prepare. For example, the $A_i$ could be particle number or kinetic energy, as more resources (``cost'') are required to prepare states with higher particle number or energy. We consider a set $K$ of density operators for which, intuitively speaking, the average cost is limited; more precisely, we consider an axiparallel hyper-rectangle $B$ in $\mathbb{R}^m$ (which intuitively speaking contains the allowed, limited, average values of
$A_1,\ldots,A_m$) and define $K$ to be the set of density operators for which the averages of $A_1,\ldots,A_m$ lie in $B$. Our condition is that $S$ maps $K$ into itself, $S(K)\subseteq K$. (That is, that there are $A_1,\ldots,A_m$ and $B$ such that $S(K)\subseteq K$.) This invariance could be intuitively understood as roughly saying that $S$ will not yield a costly output from a non-costly input.

The proof of our theorem is based on the Schauder-Tychonoff fixed point theorem, which states that for a nonempty, convex, and compact set $K$ in a locally convex space $X$ and a continuous mapping $S:K\to K$, there exists a fixed point $p\in K$, i.e., $S(p)=p$ (Theorem 5.28 in \cite{rudin_functional_1991}).

The remainder of this paper is organized as follows. In Section \ref{section2}, we formulate our main result, Theorem \ref{thm1}. In Section \ref{section3}, we describe Deutsch's proposal for defining a quantum theory on a curved space-time with CTCs and explain what our theorem entails for it. In Section \ref{Proof}, we prove Theorem \ref{thm1}.

\section{Main result}
\label{section2}
We begin with some definitions to recap the concepts of being completely positive and trace preserving: Let $\Hilbert$ be a separable Hilbert space (i.e., one that possesses a countably infinite or finite orthonormal basis) and let $T(\Hilbert)$ be the trace class of $\Hilbert$ (the set of operators $\rho:\Hilbert\to\Hilbert$ with $\Tr(\sqrt{\rho \rho^\dag})<\infty$). 
By a \emph{super-operator}, we mean a linear mapping $S:T(\Hilbert)\to T(\Hilbert)$ that is bounded in the trace norm 
\begin{equation}
\label{trace norm definition}
    \norm{\rho}_1 \coloneqq \Tr(\sqrt{\rho \rho^\dag}).
\end{equation}
$S$ is called \emph{positive} if for every positive $\rho\in T(\Hilbert)$, $S(\rho)$ is also positive. $S$ is called \emph{completely positive} if for every $D\in \mathbb{N}$ and every positive $\eta\in T(\Hilbert\otimes \mathbb{C}^D)$, $(S\otimes Id)(\eta)$ is also positive \cite{breuer_theory_2009}. $S$ is called \emph{trace-preserving} if $\Tr(S(\rho))=\Tr(\rho)$ for every $\rho\in T(\Hilbert)$. A completely positive, trace-preserving super-operator is also called a \emph{CPTP map} or a \emph{quantum channel}.

We denote the set of density operators, i.e., the set of all positive self-adjoint linear operators of trace 1 on a Hilbert space $\Hilbert$ by $D(\Hilbert)$. Note that every quantum channel maps $D(\Hilbert)$ to itself.

Two unbounded self-adjoint operators $A_1$ and $A_2$ are said to \textit{commute} if all spectral projections of the two operators commute. A set of unbounded self-adjoint operators is said to commute if they all commute pairwise.

\begin{thm}
 \label{thm1}
    Let $\Hilbert$ be a separable Hilbert space and $S: T(\Hilbert) \to T(\Hilbert)$ a quantum channel.
     Let $A_1, \hdots, A_m$ be commuting, positive possibly unbounded self-adjoint operators on $\Hilbert$. 
     Let $B \subset \mathbb{R}^m$ be a hyper-rectangle of the form $B\coloneqq [0,b_1]\times \hdots \times [0,b_m]$ for positive numbers $b_1,\hdots, b_m \in (0,\infty)$  and set
     \begin{equation}
     \label{Klabel}
         K \coloneqq \{\rho \in D(\Hilbert)|\: \rho A_i \in T(\Hilbert) \: \forall i, \, (\Tr(\rho A_1), \hdots, \Tr(\rho A_m))\in B
         \}.
     \end{equation}
     Suppose that $S(K) \subseteq K$.
     Suppose also that additionally for all choices of $m$ non-negative numbers $a_1, \hdots , a_m \in \mathbb{R}$ with $0\leq a_i <\infty $  $\forall i\in \{1,\hdots,m\}$, 
     \begin{equation}
     \label{secondCondition}
         \dim(\Hilbert_1 \cap \hdots \cap \Hilbert_m)<\infty,
     \end{equation}
     where $\Hilbert_i$ denotes the spectral subspace of $A_i$ for the interval $[0,a_i]$.\\
     Then either $K=\emptyset$ or there exists a density operator $\rho \in K \: \text{such that } \: S(\rho)=\rho$.
\end{thm}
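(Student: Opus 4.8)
\section*{Proof strategy}

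The plan is to apply the Schauder--Tychonoff fixed point theorem to $K$ inside the locally convex space $X=T(\Hilbert)$ equipped with the weak-$*$ topology it carries as the dual of the compact operators, $T(\Hilbert)=\mathcal{K}(\Hilbert)^*$; that is, $\sigma(T(\Hilbert),\mathcal{K}(\Hilbert))$-convergence $\rho_n\to\rho$ means $\Tr(\rho_n C)\to\Tr(\rho C)$ for every compact $C$. Since $\Hilbert$ is separable, $\mathcal{K}(\Hilbert)$ is separable and this topology is metrizable on norm-bounded sets, so I may argue throughout with sequences. If $K=\emptyset$ we are in the first alternative, so assume $K\neq\emptyset$; it then remains to verify the hypotheses of Schauder--Tychonoff, namely that $K$ is convex, weak-$*$ compact, and that $S:K\to K$ is weak-$*$ continuous. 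Convexity is immediate: a convex combination of density operators is a density operator, each $\rho\mapsto\Tr(\rho A_i)$ is affine, and $B$ is convex, so the defining constraints in \eqref{Klabel} are preserved.

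For compactness, every $\rho\in K$ has $\norm{\rho}_1=1$, so $K$ lies in the unit ball of $T(\Hilbert)$, which is weak-$*$ compact by Banach--Alaoglu; it thus suffices to show $K$ is weak-$*$ closed. The essential input is a tightness estimate coming from \eqref{secondCondition}. Writing $E$ for the joint spectral measure of the commuting family $A_1,\dots,A_m$, $E_i$ for the spectral measure of $A_i$, and $P_{\vec a}\coloneqq E([0,a_1]\times\cdots\times[0,a_m])$, assumption \eqref{secondCondition} says exactly that each $P_{\vec a}$ has finite rank. Since the complement of the box is contained in $\bigcup_i\{x_i>a_i\}$, a Markov--Chebyshev inequality gives, for every $\rho\in K$,
\begin{equation}
\Tr\bigl(\rho(I-P_{\vec a})\bigr)\le \sum_{i=1}^m \Tr\bigl(\rho\,E_i((a_i,\infty))\bigr)\le \sum_{i=1}^m \frac{\Tr(\rho A_i)}{a_i}\le\sum_{i=1}^m\frac{b_i}{a_i},
\end{equation}
which is below any prescribed $\varepsilon>0$ once the $a_i$ are large, uniformly in $\rho\in K$. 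If now $\rho_n\to\rho$ weak-$*$ with $\rho_n\in K$, then $\rho\ge0$ (test against $\pr{\psi}$), and since $P_{\vec a}$ is finite rank, hence compact, $\Tr(\rho P_{\vec a})=\lim_n\Tr(\rho_n P_{\vec a})\ge 1-\varepsilon$; letting $\varepsilon\to0$ gives $\Tr\rho=1$, so no mass escapes to infinity.

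To complete closedness I must pass the moment constraints to the limit, and this is the delicate point, since the unbounded $A_i$ are not compact and weak-$*$ convergence cannot be tested against them directly. Here \eqref{secondCondition} is used a second time: the finite-rank projections $E([0,n]^m)$ increase strongly to $I$ and commute with every $A_i$, so diagonalizing the $A_i$ simultaneously on their finite-dimensional ranges produces a joint orthonormal eigenbasis $\{\phi_j\}$ with $A_i\phi_j=\lambda_{ij}\phi_j$, $\lambda_{ij}\ge0$, and with only finitely many $\vec\lambda_j$ in any box. For positive $\rho$ one has $\Tr(\rho A_i)=\sum_j \lambda_{ij}\,\langle\phi_j|\rho|\phi_j\rangle$, a sum of nonnegative terms, while $\langle\phi_j|\rho_n|\phi_j\rangle\to\langle\phi_j|\rho|\phi_j\rangle$ for each $j$ because $\pr{\phi_j}$ is compact. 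Fatou's lemma for the counting measure then yields $\Tr(\rho A_i)\le\liminf_n\Tr(\rho_n A_i)\le b_i$; in particular $\Tr(\rho A_i)<\infty$, whence $\rho A_i\in T(\Hilbert)$ by the standard fact that finiteness of $\Tr(\rho A_i)$ for positive $\rho,A_i$ makes $\rho A_i$ trace class. Thus $\rho\in K$, so $K$ is weak-$*$ closed and therefore compact.

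The remaining, and I expect hardest, ingredient is the continuity of $S$ on $K$, subtle because the requirements pull in opposite directions: compactness forces the coarse weak-$*$ topology, whereas $S$ is given only as a map bounded in the \emph{fine} trace norm, and a generic quantum channel need not be weak-$*$ continuous on all of $T(\Hilbert)$. The reconciliation is that on $K$ the two topologies agree, via the noncommutative analogue of Scheff\'e's lemma: if positive trace-class operators $\rho_n\to\rho$ weak-$*$ and $\Tr\rho_n\to\Tr\rho$, then $\norm{\rho_n-\rho}_1\to0$. Given a weak-$*$ convergent sequence $\rho_n\to\rho$ in $K$, the closedness just established gives $\rho\in K$ and hence $\Tr\rho_n=1=\Tr\rho$, so Scheff\'e upgrades the convergence to $\norm{\rho_n-\rho}_1\to0$; boundedness of $S$ then gives $\norm{S(\rho_n)-S(\rho)}_1\le\norm{S}\,\norm{\rho_n-\rho}_1\to0$, and trace-norm convergence implies weak-$*$ convergence. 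Hence $S:K\to K$ is weak-$*$ continuous, and with $K$ nonempty, convex and compact, the Schauder--Tychonoff theorem produces a fixed point $\rho\in K$ with $S(\rho)=\rho$.
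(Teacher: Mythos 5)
Your proof is correct, but it takes a genuinely different route from the paper's. The paper stays in the Banach space $(T(\Hilbert),\norm{\cdot}_1)$ throughout: there, continuity of $S$ is immediate from its assumed boundedness, and all of the work goes into trace-norm compactness of $K$, proved as closedness plus total boundedness, where total boundedness comes from Markov's inequality combined with an explicitly proved rank-two-matrix lemma giving $\norm{P_\epsilon\rho P_\epsilon-\rho}_1\le 2\sqrt{\Tr((\mathds{1}-P_\epsilon)\rho)}\le\epsilon$ uniformly on $K$, with $P_\epsilon$ a finite-rank joint spectral projection. You instead equip $T(\Hilbert)=\mathcal{K}(\Hilbert)^*$ with the weak-$*$ topology, so that compactness comes cheaply from Banach--Alaoglu once weak-$*$ closedness is shown --- which you do via the same Markov-type tightness bound plus Fatou's lemma in a joint eigenbasis, whose existence is exactly what hypothesis \eqref{secondCondition} guarantees. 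The price is that continuity of $S$ is no longer free; you pay it with the dell'Antonio/Davies/Gr\"umm theorem (your ``noncommutative Scheff\'e'') that for positive trace-class operators, weak-$*$ convergence together with convergence of traces implies trace-norm convergence. That theorem is true and you apply it correctly (all elements of $K$ have trace $1$, and metrizability of the weak-$*$ topology on norm-bounded sets legitimizes your sequential reasoning), but be aware that it is precisely where the analytic core is hidden: its standard proof is essentially the paper's Lemma, namely approximating $\rho$ by $P\rho P$ for a finite-rank projection carrying most of the mass and bounding $\norm{\rho-P\rho P}_1$ by $2\sqrt{\Tr((\mathds{1}-P)\rho)}$. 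Indeed, since your argument shows that the weak-$*$ and trace-norm topologies coincide on $K$, your compactness assertion is ultimately the same as the paper's. What your route buys is modularity and brevity (compactness reduces to tightness, continuity to a citable classical result); what the paper's buys is self-containedness, with every estimate done by hand. Two small points would make your write-up airtight: give a proof or precise reference for the Scheff\'e-type lemma (e.g., Davies, \emph{Quantum Theory of Open Systems}, Lemma 4.3, or Simon, \emph{Trace Ideals}, Thm.~2.20), and justify $\Tr\rho\le 1$ for the weak-$*$ limit, e.g.\ via $\Tr\rho=\sup_P\Tr(\rho P)$ over finite-rank projections $P$, each term satisfying $\Tr(\rho P)=\lim_n\Tr(\rho_n P)\le 1$.
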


\begin{remark}
\label{PVMintro}
    The conditions \eqref{Klabel} and \eqref{secondCondition} of Theorem \ref{thm1} can be equivalently rephrased in terms of a projection valued measure (PVM) $P(\omega)$ on the Borel sigma algebra of $\mathbb{R}^m$:\\
    From the PVM we can define the commuting positive self-adjoint operators $A_i$ as \begin{equation} \label{Aiswerdendefiniert}
        A_i=\int_{\mathbb{R}_+^m} P(d\Vec{x}) x_i
    \end{equation}
    where $\mathbb{R}_+=[0,\infty)$.
    Conversely, by the spectral theorem we know of the unique existence of a PVM jointly diagonalizing the given $A_i$ (\cite{reed_methods_1980}, Chapter VIII.3). Using these relations, Theorem \ref{thm1} can equivalently be stated as:\\
    \textit{
    Let $\Hilbert$ be a separable Hilbert space, $S: T(\Hilbert) \to T(\Hilbert)$ a quantum channel, and $P(\omega)$ a PVM on the Borel sigma algebra of $\mathbb{R}^m$ supported only on the positive octant $\mathbb{R}_+^m$.
    Let $B \subset \mathbb{R}^m$ be a hyper-rectangle of the form $B\coloneqq [0,b_1]\times \hdots \times [0,b_m]$ for positive numbers $b_1,\hdots, b_m \in (0,\infty)$  and set
    \begin{equation}
    \label{KdefinitionPVM}
        K \coloneqq \Biggl\{\rho \in D(\Hilbert) \Bigg|  \int_{\mathbb{R}_+^m} \Tr(\rho P(d\Vec{x}))\Vec{x} \in B
         \Biggr\}.
    \end{equation}\\
    Suppose that $S(K) \subseteq K$. 
    Suppose also that additionally for all choices of $m$ non-negative numbers $a_1, \hdots , a_m \in \mathbb{R}$ with $0\leq a_i <\infty $  $\forall i\in \{1,\hdots,m\}$,
    \begin{equation}
        \Tr \left(P([0, a_1]\times \hdots \times [0, a_m])\right) <\infty.
    \end{equation}
    Then either $K=\emptyset$ or there exists a density operator $\rho \in K \: \text{such that } \: S(\rho)=\rho$.}
\end{remark}

 \begin{remark}
     Note that $\rho P(M) \in T(\Hilbert)$ for every set $M$ because $P(M)$ is bounded and the product of a trace-class operator and a bounded operator always lies in the trace class (\cite{reed_methods_1980}, Thm. IV.19(b)). Also note and that $\int_{\mathbb{R}^m_+} \Tr(\rho P(d\Vec{x}))x_i < \infty$ implies $\rho A_i \in T(\Hilbert)$. 
     \footnote{\label{Footnote1} Proof: By definition, $\rho A_i$ lies in the trace class if $\Tr |\rho A_i| < \infty$. Since $\rho$ and $A$ are self-adjoint, $|\rho A_i| = \sqrt{\rho A_i A_i \rho}= \rho^{1/2} |A_i| \rho^{1/2}$; since $A_i$ is positive, $|A_i|=A_i$ and $|\rho A_i| = \int_{\mathbb{R}_+^m} \rho^{1/2} P(d\vec{x}) \rho^{1/2} x_i$. Since $\rho^{1/2} P(M) \rho^{1/2}$ is positive, the Fubini-Tonelli theorem yields that trace and integration can be interchanged, so $\Tr |\rho A_i|= \Tr \int_{\mathbb{R}_+^m}\rho^{1/2} P(d\vec{x}) \rho^{1/2} x_i= \int_{\mathbb{R}_+^m}\Tr (\rho^{1/2} P(d\vec{x}) \rho^{1/2}) x_i= \int_{\mathbb{R}_+^m}\Tr (\rho P(d\vec{x})) x_i$, where the last step follows if the trace is evaluated in an orthonormal eigenbasis of $\rho$.}
 \end{remark}

\begin{remark}
    Theorem \ref{thm1} can be slightly generalized with respect to the allowed set $B \subset \mathbb{R}^m$: It is also valid for a compact convex $B\subset \mathbb{R}^m_+$ with the additional requirement that for all $(x_1, \hdots, x_m) \in B$ the elements $(0, x_2 \hdots, x_m),\, (x_1, 0, x_3 \hdots, x_m), \hdots,\, (x_1, \hdots, x_{m-1}, 0)$ are also elements of $B$.
\end{remark}

\begin{remark}
    \textit{Let $\Hilbert$ be a finite dimensional Hilbert space and $S: T(\Hilbert) \to T(\Hilbert)$ a quantum channel.
    Then there exists a fixed point $\rho \in D(\Hilbert)$ of $S$, i.e., $S(\rho)=\rho$.} \\
    That is, for a finite dimensional Hilbert space, the existence of a fixed point of any quantum channel follows even without the additional restrictions. This can be deduced from elementary considerations \cite{deutsch_quantum_1991} which we include here for the reader's convenience: \\
    Notice first that set of all density operators on a finite-dimensional Hilbert space is compact (as it is a closed bounded set in the finite-dimensional space $T(\Hilbert)$). Consider now the expression
    \begin{equation}
    \label{converges}
         {\rho}(N)=\frac{1}{N+1}\sum^N_{n=0} S^n ( {\rho}(0)),
    \end{equation}
    where $ {\rho}(0)$ is any density operator; $ {\rho}(N)$ is the mean of $N+1$ density operators and therefore itself a density operator. 
    Since
    \begin{equation}
        S(\rho(N))-\rho(N)=\frac{1}{N+1} (S^{N+1}(\rho(0))-\rho(0)),
    \end{equation}
    because the left hand side is a telescopic sum of the right hand side,
    the trace norm of $S({\rho}(N))- {\rho}(N)$ obeys
    \begin{equation}
    \label{Deutschproof}
        0 \leq \norm{S( {\rho}(N))- {\rho}(N)}_1  \leq \frac{2}{N+1}  .
    \end{equation}
    The theorem of Bolzano-Weierstraß states that any bounded sequence in a finite-dimensional vector space has at least one accumulation point \cite{konigsberger_analysis_2004}. 
    Thus, the sequence $ {\rho}(N)$, which lies in the bounded set $D(\Hilbert)$, must have an accumulation point $ {\rho}_{ac}$, so $\rho_{ac}=\lim_{k\to\infty} \rho(N_k) $ for some subsequence $N_k \to \infty$.
    Applying \eqref{Deutschproof} to $N_k$, we find that $S(\rho(N_k))\to \rho_{ac} $ as $k\to \infty$. By continuity of $S$,
    \begin{equation}
        S(\rho_{ac})=S(\lim_{k\to\infty} \rho(N_k))=\lim_{k\to\infty} S(\rho(N_k))=\rho_{ac},
    \end{equation}
    so $\rho_{ac}$ is a fixed point of $S$, and the claim is proved.\footnote{Using Brouwer's fixed-point theorem, an alternative proof is possible \cite{wolf_quantum_2012}. Brouwer's fixed-point theorem states that any continuous function $f$ mapping a nonempty compact convex set in a finite-dimensional vector space to itself, there is a point $x_0$ in that set such that $f(x_0)=x_0$ \cite{ben-el-mechaieh_elementary_2022}.
    The quantum channel $S$ is a linear operator mapping the set $D(\Hilbert)$ to itself.
    Because $S$ is linear, it is also continuous. The set of all density operators acting on $\Hilbert$ is non-empty, compact and also convex.
    Therefore, there exists at least one fixed-point ${\rho}_0 \in D(\Hilbert)$ such that $S({\rho}_0)={\rho}_0$. }\\
    The sequence $\rho(N)$ in \eqref{converges} does not only have a convergent subsequence, but actually converges for any initial $\rho(0) \in D(\Hilbert)$ (Chapter 6 in \cite{wolf_quantum_2012}).
\end{remark}

\begin{remark}
    \textit{In infinite dimensions a quantum channel can fail to have a fixed point.}\\
      Here is an example: let $\Hilbert=\ell^2 = L^2(\mathbb{N})$ be the space of square-summable sequences and $V$ the right shift, 
      \begin{equation}
          V(x_1,x_2,\hdots)=V(0,x_1,x_2,\hdots).
      \end{equation}
      Then the map $S(\rho)=V\rho V^\dag$ is a quantum channel without a fixed point.
 \end{remark}   
 \begin{remark}
     \textit{Let $P$ be as in Eq. \eqref{Aiswerdendefiniert} and $B\subset \mathbb{R}^m$. If $P(B)\neq 0$, then every}
     \hbox{$\rho \in D(\text{range}[P(B)])$} \textit{lies in $K$; in particular $K \neq \emptyset$. }\\
     Indeed, we can take \eqref{KdefinitionPVM} as the definition of $K$. Since $\rho P(B)=\rho$, we have that $\rho P(C)=0$ for any set $C\subset \mathbb{R}^m$ disjoint from $B$. 
     Therefore, the probability measure $\Tr(\rho P(d\Vec{x}))$ on $\mathbb{R}^m$ is concentrated in the set $B$. 
     Since $B$ is bounded, this measure has finite expectation value $\int_{\mathbb{R}_+^m} \Tr(\rho P(d\vec{x})) \vec{x}$; since $B$ is convex, the expectation value lies in $B$. 
     Thus, any $\rho$ concentrated in range$(P(B))$ lies in $K$. 
     Since $P(B)\neq 0$, range$(P(B))$ has nonzero dimension, so there exist such $\rho$.
 \end{remark}

\section{Application to retrocausality}
\label{section3}
In this section we explain the relevance of fixed point theorems, in particular Theorem \ref{thm1}, to quantum mechanics on space-times with causality violations, in particular with closed timelike curves (CTCs). 
While we do not claim that CTCs exist in reality, we are interested in the theoretical challenge of whether and how quantum mechanics can be formulated on a space-time with CTCs. 
As we explain below, the strategy proposed for this purpose by Deutsch \cite{deutsch_quantum_1991} seems reasonable and viable, even more so in the light of Theorem \ref{thm1}.

\subsection{Closed Timelike Curves (CTCs)}

The framework of general relativity, in which space-time is a Lorentzian manifold, in principle allows for CTCs. Explicit metrics with CTCs have come up in various contexts \cite{godel_example_1949, thorne_closed_1993, bonnor_exact_2005, carter_Kerr}, and there is considerable literature on conditions under which the Einstein field equation of general relativity would lead to the formation of CTCs, or conditions that would preclude such solutions (e.g., \cite{friedman_cauchy_1990, kim_vacuum_1991}), or whether those CTCs would be traversable. Other kinds of retrocausality have also been studied, for example in classical electrodynamics in flat space-time by Wheeler and Feynman \cite{wheeler_classical_1949}.

The obvious problem arising from space-times with CTCs is that if a particle were to traverse a CTC, it would encounter its own past. The following causality paradox is often called the grandfather paradox and has a long
history: A person has figured out how to time travel. The person goes back in time and kills his grandfather. Therefore the time traveler is never born---a contradiction. We can also formulate an analogous paradox for a single particle: Consider a scenario where a particle, following a CTC, collides with its younger self, altering its trajectory so that its younger self now follows a different path. Following that path, however, the particle will later fail to collide with its younger self, yielding a contradiction.

It thus seems that the existence and traversability of CTCs leads to logical inconsistencies and thus precludes the possibility of physical theories, classical or quantum, on such a space-time. Some authors \cite{arntzenius_time_2002} have argued that the laws of physics will still have sufficiently many consistent solutions to allow for a plausible course of events avoiding any contradiction. We now turn to a particular way out of the paradox for quantum theories proposed by Deutsch \cite{deutsch_quantum_1991}.

\subsection{Model of quantum retrocausality}

The simplest toy example $\mathscr{M}_1$ of a space-time with CTCs is obtained from Minkowski space-time by ``rolling it up'': consider the slab between two parallel spacelike 3-planes $\mathcal{P}_1,\mathcal{P}_2$, and identify $\mathcal{P}_1$ with $\mathcal{P}_2$ along a time translation $\tau$ that maps $\mathcal{P}_1$ to $\mathcal{P}_2$; that is, a future-directed timelike current hitting the past side of $\mathcal{P}_2$ at $x$ comes out from the future side of $\mathcal{P}_1$ at $\tau^{-1}(x)$. In a Lorentz frame in which $\mathcal{P}_1$ and $\mathcal{P}_2$ are surfaces of constant time coordinate and $\tau$ translates along the time axis (see Figure \ref{fig:Sketchoftimecylinder}), $\mathcal{M}_1$ can be written as
\begin{equation}
    \left( \mathbb{R}_{/T\mathbb{Z}} \times \mathbb{R}^3 , ds^2=- dt^2+ dx_1^2+dx_2^2+dx_3^2\right)
\end{equation}
where $T>0$ is the timelike distance between $\mathcal{P}_1$ and $\mathcal{P}_2$. The symbol $\mathbb{R}_{/T\mathbb{Z}}$ means that we are considering the set $[0,T)$ where the upper end is glued to the lower end, i.e., $T$ is identified with zero. Topologically this corresponds to a circle $\mathbb{S}^1$, so the $t$ coordinate can now be regarded as cyclic.
\begin{figure}[ht]
    \centering
    \includegraphics[scale = 1]{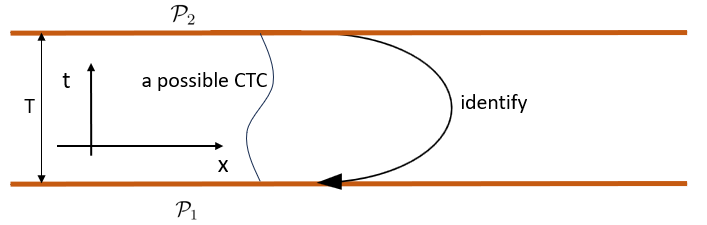}
    \caption{Space-time diagram of the simple causality violating space-time $\mathscr{M}_1$}
    \label{fig:Sketchoftimecylinder}
\end{figure}

Considering wave functions on $\mathscr{M}_1$ leads to the following problem: 
Since $\mathcal{P}_2$ is really the same set of space-time points as $\mathcal{P}_1$, the wave function on $\mathcal{P}_2$ should agree with that on $\mathcal{P}_1$; or in coordinates, $\psi(t=T)=\psi(t=0)$. However, for a unitary time evolution operator $e^{-iHT/\hbar} $ representing the time evolution for time length T, the equation

\begin{equation}
    \ket{\Psi}= e^{-iHT/\hbar} \ket{\Psi}
\end{equation}
does not necessarily have a solution $\ket{\Psi}\in L^2(\mathbb{R}^3)$. It only has a solution if $H$  has an eigenvalue of $ \frac{2\pi \hbar}{T}  k$ for a $k \in \mathbb{Z}$. For generic $H$, this is not the case (and even if such eigenvalues exist, the solutions $\ket{\Psi}$ constrained in this way are presumably too few for an acceptable physical theory).

For density operators the situation is different. Any $\rho$ diagonal in an eigenbasis of the Hamiltonian $H$ governing the time evolution is invariant under the time evolution. Therefore it is the same after a circumnavigation of the time cylinder described above. Having such $\rho$ is therefore logically consistent with no constraints on the Hamiltonian or the length of the time cylinder.

Thus, the crucial problem of the consistency of the theory in the presence of closed timelike curves can be, it seems, much improved if we are willing to contemplate the possibility that the quantum state of the universe is fundamentally mixed---not because of the observers' ignorance of the actual wave function, and not either because of tracing out some degrees of freedom, but directly on the fundamental level. 
This possibility, that the fundamental quantum state may not be a wave function but a density operator, has been considered before for other reasons under the name ``density matrix realism'' \cite{durr_role_2005, Chen25}.

\subsection{Another model}
To obtain a more nuanced model of CTCs allowing for a causality violating region as well as a chronology respecting region we modify a Minkowski space-time to obtain a space time $\mathscr{M}_2$ in the following way:
We choose a space-volume $\mathcal{F}\subset \mathbb{R}^3$
and two times $T_1$ and $T_2$ with $T_1<T_2$ to obtain two spacelike sets
\begin{equation}
    \mathcal{F}_i\coloneqq \{T_i\} \times \mathcal{F}
\end{equation}
We then identify $\mathcal{F}_1$ with $\mathcal{F}_2$. A sketch is given in Figure \ref{fig:SketchofHilbertSpaces}.
This model can incorporate CTCs because it transports whole states from one space-time surface at time $T_2$ to the other at the earlier time $T_1$.
\begin{figure}[ht]
    \centering
    \includegraphics[scale = 0.7]{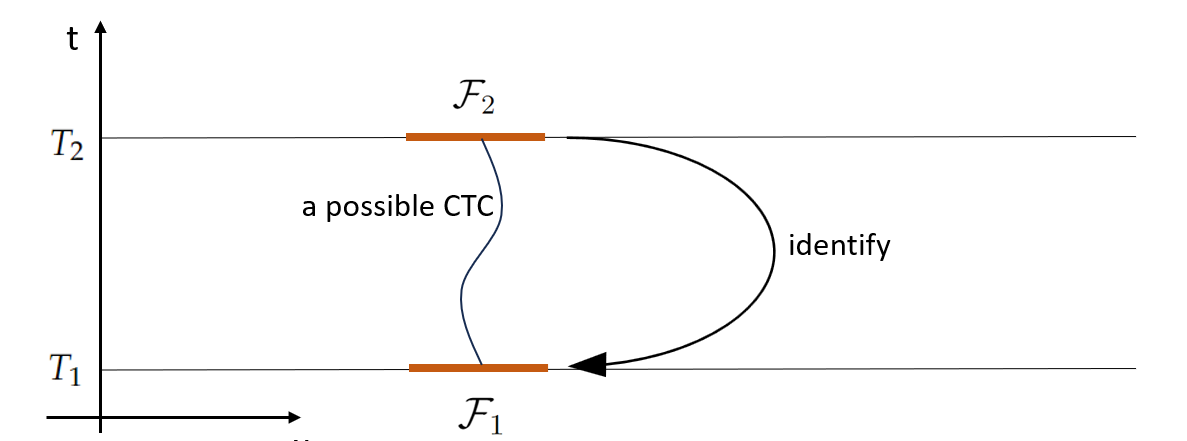}
    \caption{Space-time diagram of Deutsch's model space-time $\mathscr{M}_2$ with the chronology respecting region to the past of $T_1$, the chronology respecting region to the future of $T_2$ and the region with CTCs between $T_1$ and $T_2$}
    \label{fig:SketchofHilbertSpaces}
\end{figure}
\\
The normal unitary evolution applies in the chronology respecting regions prior to $T_1$ and after $T_2$.\\
In the causality violating region, we define the time evolution as 
follows: We assume that for regions $\mathcal{R} \subset \mathbb{R}^3$, 
the Hilbert space factorizes according to
\begin{equation}\label{factorize}
\mathscr{H}_{\mathrm{full}} = \mathscr{H}_{\mathcal{R}^c} \otimes 
\mathscr{H}_{\mathcal{R}} \,,
\end{equation}
where $\mathcal{R}^c = \mathbb{R}^3 \setminus \mathcal{R}$ is the 
complement of the region $\mathcal{R}$. Eq.\ \eqref{factorize} is known 
to be true for bosonic and fermionic Fock spaces; we will apply it to 
$\mathcal{R}=\mathcal{F}$. We postulate first that the density operator 
$\rho_{T_1+}$ (right after time $T_1$) on $\mathscr{H}_{\mathrm{full}}$ 
arises from a given $\rho_{T_1-}$ (right before $T_1$) by combining a 
state $\rho_1$ coming out of the future side of $\mathcal{F}_1$ with the 
prior reduced state, 
\begin{equation}
\rho_{in}:=\Tr_{\mathcal{F}}(\rho_{T_1-})
\end{equation}
on $\mathcal{F}^c$, which we call the initial state from the chronology respecting region. The combination of the two is the following:
\begin{equation}
\rho_{T_1+} =  \rho_{in} \otimes \rho_1 \,.
\end{equation}
We postulate next that $\rho$ evolves unitarily between $T_1+$ and $T_2-$,
\begin{equation}
\rho_{T_2-}= U \, \rho_{T_1+} \, U^\dagger
\end{equation}
with $U=e^{-iH(T_2-T_1)/\hbar}$. Finally, we postulate that the reduced 
state
\begin{equation}
\rho_2:=\Tr_{\mathcal{F}^c}(\rho_{T_2-})
\end{equation}
on $\mathcal{F}$ gets transported back from $\mathcal{F}_2$ to 
$\mathcal{F}_1$.\footnote{For the region after $T_2$, we may postulate, 
e.g., $\rho_{T_2+}= \Tr_{\mathcal{F}}(\rho_{T_2-}) \otimes |0\rangle 
\langle 0|$ (with $|0\rangle$ the Fock vacuum vector) or $\rho_{T_2+}= 
\Tr_{\mathcal{F}}(\rho_{T_2-}) \otimes 
\Tr_{\mathcal{F}^c}(\rho_{T_1-})$, but that does not matter for the 
considerations in this paper.} Consistency of the time evolution then 
amounts to
\begin{equation}
\rho_2 = \rho_1
\end{equation}
or
\begin{equation}
S(\rho_1)=\rho_1
\end{equation}
with $S$ given by
\begin{equation}
\label{quantumchannel}
S(\rho) = \Tr_{in} (U \, \rho_{in} \otimes \rho \, U^\dagger) \,,
\end{equation}
where $\Tr_{in}$ means the same as $\Tr_{\mathcal{F}^c}$, i.e., a partial trace over the initial degrees of freedom from the chronology respecting region. For given 
$\rho_{in}$, the mapping $S$ represents the time evolution from 
$\mathcal{F}_1$ to $\mathcal{F}_2$ and back, and thus going around the 
CTC once.
The consistency of the postulated dynamics is equivalent to the existence of fixed points of $S$. As mentioned before, for the case of finite dimensional quantum mechanics, such a point exists for every $\rho_{in}\in \Hilbert_{\mathcal{F}^c}$.

\begin{remark}
\label{S is a quantum channel}
    \textit{Let $\Hilbert_{\mathcal{F}^c}$ and $\Hilbert_{\mathcal{F}}$ both be separable Hilbert spaces. Then the map \eqref{quantumchannel} is a quantum channel.}\\
    Indeed, $S$ is trace preserving: Let $\rho \in T(\Hilbert)$ be a trace class operator on $\Hilbert$, then
    \begin{equation}
    \begin{split}
        \Tr(S(\rho))&=\Tr(\Tr_{in}(U{\rho}_{in}\otimes {\rho}U^\dag))\\
        &=\Tr(U{\rho}_{in}\otimes {\rho}U^\dag)\\
        &=\Tr({\rho}_{in}\otimes {\rho})= \Tr(\rho_{in})\Tr(\rho)=\Tr(\rho)
    \end{split}
    \end{equation}
    as $\rho_{in}$ has trace 1.
    
    $S$ is completely positive: Let $\rho\in T(\Hilbert \otimes \mathbb{C}^N)$ be a positive operator for a $N\in \mathbb{N}$. 
    Let $\{e_n\}$ be an orthonormal basis of $\Hilbert_{in}$ which is being traced out. Then for a $\ket{\psi} \in \Hilbert \otimes \mathbb{C}^N$ we have:
    \begin{equation}
    \begin{split}
        \bra{ \psi } (S \otimes \text{Id})(\rho) \ket{ \psi } 
    &= \bra{ \psi } \Tr_{in}\left( (U \otimes \text{Id})(\rho_1 \otimes \rho)(U^\dag \otimes \text{Id}) \right) \ket{\psi}\\
    & = \sum_n \bra{e_n} \bra{\psi} (U \otimes \text{Id})(\rho_1 \otimes \rho)(U^\dag \otimes \text{Id})  \ket{\psi} \ket{e_n}\geq 0
    \end{split}
    \end{equation}
    because $T := \rho_1 \otimes \rho$ is a positive operator, and $R T R^\dag$ is always positive for a unitary $R$.
    
    $S$ is bounded in the trace norm \eqref{trace norm definition}, which is equivalent to $S$ being continuous in the trace norm:
    To prove boundedness, we will consider all three operations of $S$ separately.
    \begin{itemize}
        \item ${\rho} \mapsto {\rho}_{in} \otimes {\rho}$ is bounded, because \begin{equation}
            ||{\rho}_{in} \otimes {\rho}||_1 = ||{\rho}_{in}||_1  ||{\rho}||_1 .
        \end{equation} $||{\rho}_{in}||_1$ is finite because ${\rho}_{in}$ is a trace class operator with trace norm 1. 
        \item ${\rho} \mapsto U{\rho}U^\dag$ is bounded, because \begin{equation}
            ||U{\rho}U^\dag||_1 \leq ||U||_{Op}  ||{\rho}U^\dag||_1 \leq ||U||_{Op}  ||{\rho}||_1||U^\dag||_{Op}=||{\rho}||_1.
        \end{equation} $||*||_{Op}$ represents the operator norm. The norm inequality can be derived from Hölder's inequality $\norm{AB}_1\leq \norm{A}_1 \norm{B}_{Op}$, which holds for the trace norm \cite{noauthor_schatten_2023}.
        \item ${\rho} \mapsto \Tr_{in}({\rho})$ is bounded because the inequality $||\Tr_{in}({\rho})||_1 \leq ||{\rho}||_1$ holds for the trace norm \cite{rastegin_relations_2012}.
    \end{itemize}
    The composition of bounded maps is again bounded. Therefore the map $S$ is bounded in the trace norm. It is therefore a quantum channel. 
\end{remark}

\subsection{The infinite-dimensional case}

As mentioned, the consistency of Deutsch's proposed dynamics amounts to the existence of fixed points of $S$, which is guaranteed if the Hilbert space $\mathscr{H}_{\mathcal{F}}$ has finite dimension but not otherwise. We now explain how our Theorem \ref{thm1} can be applied to the infinite-dimensional case and provides the existence of a fixed point under physically plausible conditions. Essentially, these conditions say that the average particle number and energy cannot be increased without limits through iterations of $S$. They seem plausible if we imagine that even for a time machine it would be too good to be true if we could arbitrarily multiply resources (such as particles and energy) by letting them repeatedly traverse a CTC.\footnote{For example, it seems plausible that high pressure in the rectangle $[T_1,T_2]\times \mathcal{F}$ between $\mathcal{F}_1$ and $\mathcal{F}_2$ in Figure \ref{fig:SketchofHilbertSpaces} leads to the expulsion of particles from this rectangle. Furthermore, it would seem that the space-time is unstable if energy can accumulate and increase from seemingly nowhere; after all, by the Einstein equation this large amount of energy would cause a change of the space-time geometry, perhaps even removing the CTCs. Finally, our condition can be regarded as a weak form of conservation of particle number and energy. Friedman et al. (\cite{friedman_cauchy_1990}, Sec. II.F) argued that conservation laws, suitably understood, should remain valid in space-times with CTCs.} Here is a precise statement of our assumptions and the corollary obtained with their help from Theorem \ref{thm1}.

\begin{itemize}
\item[(A1)] We are given a separable Hilbert space $\mathscr{H}_{1\mathcal{F}}$ representing the 1-particle Hilbert space associated with the space volume $\mathcal{F}$. We are also given a 1-particle Hamiltonian $H_1$ in $\mathscr{H}_{1\mathcal{F}}$ that is positive, has purely discrete spectrum, and has only finitely many eigenvalues including multiplicity up to any chosen energy $e>0$.
\end{itemize}

It is a common assumption that 1-particle Hamiltonians are positive. If $\mathcal{F}$ is a bounded region, then that usually forces the spectrum of differential operators (such as the 1-particle Laplacian or Dirac
operator) to be discrete, with only finitely many eigenvalues in every energy interval.

\begin{itemize}
\item[(A2)] We take $\mathscr{H}_{\mathcal{F}}$ to be the bosonic or fermionic Fock space over $\mathscr{H}_{1\mathcal{F}}$, \begin{equation} \mathscr{H}_{\mathcal{F}}= \bigoplus_{n=0}^\infty \mathcal{S}_{\pm} \mathscr{H}_{1\mathcal{F}}^{\otimes n} \end{equation} with $\mathcal{S}_{\pm}$ the (anti-)symmetrization operator.

\item[(A3)] We take $m=2$, $A_1=\hat{N}$ as the particle number operator on $\mathscr{H}_{\mathcal{F}}$, and $A_2=\hat{E}$ as the free (i.e.,
non-interacting) Hamiltonian obtained via second quantization of $H_1$.
\end{itemize}

It then follows using (A1) that $A_1$ and $A_2$ are commuting positive operators and that the joint spectral subspace of any bounded set has finite dimension (see the proof of Corollary \ref{Corollary}).

\begin{Corollary}
\label{Corollary}
    \textit{Assume (A1)--(A3). Let $\mathscr{H}_{in}$ be any separable Hilbert space, $\rho_{in}\in D(\mathscr{H}_{in})$, $U$ any unitary on $\mathscr{H}_\mathrm{full}:= \mathscr{H}_{in} \otimes \mathscr{H}_{\mathcal{F}}$, and let $S$ be given by \eqref{quantumchannel}. 
    Suppose there exist real numbers $N,E>0$ such that \begin{equation} K=\Bigl \{ \rho \in D(\mathscr{H}_{\mathcal{F}}) \Big| \rho\hat{N}, \rho\hat{E} \in T(\mathscr{H}_{\mathcal{F}}) \text{ and } \Tr(\rho
\hat{N}) \leq N \text{ and } \Tr(\rho \hat{E})\leq E \Bigr\} \end{equation} gets mapped by $S$ into itself. Then $S$ possesses a fixed point in the set $K$.}
   
\end{Corollary}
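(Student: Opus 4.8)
The plan is to apply Theorem \ref{thm1} directly, taking $\Hilbert = \mathscr{H}_{\mathcal{F}}$, $m=2$, $A_1 = \hat{N}$, $A_2 = \hat{E}$, and $B = [0,N]\times[0,E]$, so that the set $K$ of the Corollary coincides with the set \eqref{Klabel}. Several hypotheses of Theorem \ref{thm1} are either immediate or already available: $\mathscr{H}_{\mathcal{F}}$ is separable, being a Fock space over the separable space $\mathscr{H}_{1\mathcal{F}}$; the map $S$ is a quantum channel by Remark \ref{S is a quantum channel}, since $\mathscr{H}_{in}$ and $\mathscr{H}_{\mathcal{F}}$ are both separable; and the invariance $S(K)\subseteq K$ is assumed in the statement. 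It therefore remains to show that $\hat{N}$ and $\hat{E}$ are commuting positive self-adjoint operators and to verify the finite-dimensionality condition \eqref{secondCondition}.

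For the first point, I would fix an orthonormal eigenbasis $\{\phi_j\}$ of $H_1$ with eigenvalues $\epsilon_j \geq 0$ (it exists because $H_1$ has purely discrete spectrum by (A1)) and pass to the induced occupation-number basis of $\mathscr{H}_{\mathcal{F}}$, (anti-)symmetrized as appropriate. Every such basis vector is a joint eigenvector of $\hat{N}$, with eigenvalue its total occupation number, and of $\hat{E}$, with eigenvalue the sum of the occupied single-particle energies. Since both operators are diagonal in one common orthonormal basis and have non-negative eigenvalues, they are self-adjoint, positive, and commute in the sense that all their spectral projections commute.

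The only substantial step is condition \eqref{secondCondition}: for arbitrary $a_1,a_2 \geq 0$ the joint spectral subspace $\Hilbert_1\cap\Hilbert_2$---spanned by the occupation-number vectors with total particle number $n\leq a_1$ and total energy $\sum\epsilon_j \leq a_2$---must be finite-dimensional. I would argue that, since the $\epsilon_j$ are non-negative, any occupied mode in such a vector has $\epsilon_j\leq a_2$, and by (A1) there are only finitely many single-particle levels (with multiplicity), say $M=M(a_2)$, with energy $\leq a_2$. The particle-number cutoff then restricts the admissible vectors to at most $n\leq\lfloor a_1\rfloor$ particles distributed among these $M$ levels, giving at most $\sum_{n=0}^{\lfloor a_1\rfloor}\binom{M+n-1}{n}$ vectors (bosonic case) or $\sum_{n=0}^{\lfloor a_1\rfloor}\binom{M}{n}$ vectors (fermionic case)---a finite number in either case. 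I expect this counting to be the main obstacle, and I would stress that both cutoffs are genuinely used: if $H_1$ has a zero eigenvalue, the energy bound alone leaves the occupation of the zero-energy mode unconstrained, and it is precisely the particle-number bound that keeps the dimension finite.

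Finally, $K\neq\emptyset$ since the vacuum $\pr{0}$ satisfies $\Tr(\rho\hat{N})=0\leq N$ and $\Tr(\rho\hat{E})=0\leq E$. Theorem \ref{thm1} then furnishes a density operator $\rho\in K$ with $S(\rho)=\rho$, the desired fixed point.
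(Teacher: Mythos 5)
Your proposal is correct and follows essentially the same route as the paper's own proof: apply Theorem \ref{thm1} with $m=2$, $A_1=\hat{N}$, $A_2=\hat{E}$, $B=[0,N]\times[0,E]$, verify positivity/commutation via the joint Fock eigenstructure, establish condition \eqref{secondCondition} by counting the finitely many occupation-number vectors built from the $M(a_2)$ single-particle levels of energy $\leq a_2$ with at most $\lfloor a_1\rfloor$ particles, and use the vacuum to show $K\neq\emptyset$. The only cosmetic differences are that the paper gets positivity and commutation from second quantization ($\hat{N}=d\Gamma(I)$, $\hat{E}=d\Gamma(H_1)$) and bounds the dimension by $\sum_{n\leq a_1} i(a_2)^n$ rather than by your exact binomial counts.
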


\begin{proof}
    The Hilbert space $\mathscr{H}_\mathrm{full}$ is separable by construction, and $S$ is a quantum channel according to Remark \ref{S is a quantum channel} above. 
    The operators $A_1=\hat{N}$ and $A_2=\hat{E}$ are clearly self-adjoint; they are both positive because they are second quantizations of positive operators, $\hat{N}=d\Gamma(I)$ and $\hat{E}=d\Gamma(H_1)$; they commute because $I$ and $H_1$ commute. In our case, the set $B$ is $B=[0,N]\times [0,E]$. 
    Next, by (A1) there exists an orthonormal eigenbasis $(\phi_i)_{i\in\mathbb{N}}$ of $H_1=\sum_i E_i |\phi_i\rangle \langle \phi_i|$ with eigenvalues $0\leq E_1\leq E_2\leq \ldots$; let $i(e)$ be the number with multiplicity of eigenvalues $\leq e$, which is finite by (A1). 
    Then 
    \begin{align}
         \mathrm{rank}\, P([0,a_1]\times [0,a_2]) &\leq \dim \mathrm{span}\bigcup_{n\in \mathbb{Z}\cap [0,a_1]} \Bigl\{ \mathcal{S}_{\pm} (\phi_{i_1}\otimes \cdots \otimes \phi_{i_n}) \Big| i_1,\ldots,i_n\leq i(a_2) \Bigr\} \nonumber\\
         &\leq \sum_{n\in \mathbb{Z}\cap [0,a_1]} i(a_2)^n < \infty.
    \end{align}
Thus, all hypotheses of Theorem \ref{thm1} are satisfied.
The set $K$ is non-empty because $\rho=|0\rangle \langle 0|$ with $|0\rangle$ the Fock vacuum satisfies $\Tr(\rho \hat{N})=0<N$ and $\Tr(\rho \hat{E})=0<E$. Therefore, $S$ has a fixed point in $K$.
\end{proof}
\noindent 
\begin{remark}
    There might be more than one fixed point for a given $\rho_{in}$, and thus more than one consistent solution of Deutsch's time evolution for the density operator in the space-time of Figure \ref{fig:SketchofHilbertSpaces} with the same initial data. This non-uniqueness was already discussed by Deutsch \cite{deutsch_quantum_1991}.
\end{remark}

\section{Proof of Theorem 1}
\label{Proof}
\begin{proof}
We want to utilize the Schauder-Tychonoff fixed-point theorem to prove Theorem \ref{thm1}. Therefore, we will check its hypotheses in the following.
To begin with, we explain why the trace class $T(\Hilbert)$ with the trace norm is a locally convex topological vector space. This holds because by definition of the trace class, each element has finite trace norm, and the trace class together with the trace norm forms a Banach space (\cite{reed_methods_1980} Theorem VI.20). This implies the that the trace class with the trace norm is a locally convex topological vector space.\\
\\
\underline{Part 1: Characteristics of the map $S$}\\ By hypothesis, $S$ maps the set $K$ as defined in \eqref{Klabel} into $K$. 
Since $S$ is a quantum channel, it is by definition bounded in the trace norm and therefore continuous in the trace norm.\\
\\
\underline{Part 2: $K$ is convex}\\
Consider two arbitrary elements ${\rho}_a$ and ${\rho}_b$ of $K$.
We need to show that the set
\begin{equation}
    G=\{  \alpha {\rho}_a + (1-\alpha) {\rho}_b | \alpha \in [0,1] \}
\end{equation} is a subset of $K$.
For every ${\rho} \in G$, ${\rho}\in D(\Hilbert)$ and 
\begin{equation}
    \begin{split}
         \begin{pmatrix}
           \Tr({\rho} A_1) \\
           \vdots \\
           \Tr({\rho} A_m)
         \end{pmatrix} &=
         \begin{pmatrix}
           \Tr([\alpha {\rho}_a + (1-\alpha) {\rho}_b] A_1) \\
           \vdots \\
           \Tr([\alpha {\rho}_a + (1-\alpha) {\rho}_b] A_m)
         \end{pmatrix} \\
         &=
         \begin{pmatrix}
           \alpha \Tr( {\rho}_a A_1)+ (1-\alpha) \Tr({\rho}_b A_1) \\
           \vdots \\
           \alpha \Tr( {\rho}_a A_m)+ (1-\alpha) \Tr({\rho}_b A_m)
         \end{pmatrix} \\&=
         \alpha\begin{pmatrix}
            \Tr( {\rho}_a A_1) \\
           \vdots \\
            \Tr( {\rho}_a A_m)
         \end{pmatrix} + (1-\alpha) \begin{pmatrix}
           \Tr({\rho}_b A_1) \\
           \vdots \\
            \Tr({\rho}_b A_m)
         \end{pmatrix}
         \\&\eqqcolon (\alpha E_a + (1-\alpha) E_b) \in B
         \end{split}
\end{equation}
because $\rho_a \in K$ implies $E_a \in B$ (analogously for $\rho_b$) and $B$ is convex. This in turn implies that $\rho \in K$.
Therefore, $G$ is a subset of $K$ and the set $K$ is convex.\\
\\
\underline{Part 3: $K$ is compact}\\
In complete metric spaces, a set is compact if and only if it is closed and totally bounded (\cite{munkres_topology_2014} Theorem 45.1). Totally bounded means for a subset $K$ of a metric space that for all $\epsilon >0$, there is a finite collection of open balls of radius $\epsilon$ whose union contains $K$.
If for a subset $K$ of a normed vector space $T$, for all $\epsilon >0$ there is a finite-dimensional subspace $F_\epsilon \subset T$ and a bounded set $L_\epsilon \subset F_\epsilon$ such that any element of $K$ is at most $\epsilon$ away from an element of $L_\epsilon$, then $K$ is totally bounded for the following reason:

For a given $\epsilon>0$ we want to show that $K$ can be covered by finitely many open balls of radius $\epsilon$. Now take the finite dimensional subspace and the bounded set therein for $\frac{\epsilon}{2}$.
We know that $L_{\frac{\epsilon}{2}}$ can be covered by finitely many open balls of radius $\frac{\epsilon}{2}$ because it is a bounded set in a finite-dimensional vector space (so its closure is compact). Denote the center points of these finitely many open $\frac{\epsilon}{2}$-balls by $x_1,\hdots,x_r$. We show that the collection of $\epsilon$-balls around $x_1,\ldots,x_r$ covers $K$. 
Indeed, for every $x\in K$ there exists a $y\in L_{\frac{\epsilon}{2}}$ such that $\norm{x-y} \leq \frac{\epsilon}{2}$ by hypothesis. There also exists an $i\in\{1,\ldots,r\}$ such that $\norm{y-x_i} < \frac{\epsilon}{2}$. By the triangle inequality, $\norm{x-x_i} < \epsilon$, which is what we needed to show.

The trace class operators are a complete metric space with respect to the trace norm. We first show that the set $K$ as defined in \eqref{Klabel} is closed by showing that for any sequence $(\rho_n)_{n\in \mathbb{N}} \subseteq K$ that converges in the trace norm, also $\rho\coloneqq \lim_{n\to \infty } \rho_n \in K $: First, $D(\Hilbert)$ can be shown to be closed in the trace norm by elementary consideration. 

Second we will show that $\Tr(\rho A_i)\in [0,b_i]$ for every
$i\in\{1,\ldots,m\}$: We approximate $A_i$ by bounded operators. 
To this end, we use the PVM $P$ from Remark \ref{PVMintro} to define for every $k\in(0,\infty)$ the projection \hbox{$Q_{ik}:=P(\{x_i\leq k\}) = P(\mathbb{R}^{i-1}_+ \times [0,k] \times \mathbb{R}^{m-i}_+)$.} 
It follows that $Q_{ik} A_i Q_{ik} = \int_{\mathbb{R}^m_+} P(d\vec{x}) \, x_i \, 1_{x_i\leq k}$ is a bounded self-adjoint operator; 
in particular, $\rho Q_{ik} A_i Q_{ik}$ is a trace class operator. Fubini's theorem allows us to write \begin{align} \Tr(\rho Q_{ik} A_i Q_{ik}) &= \Tr \int_{\mathbb{R}^m_+} \rho P(d\vec{x}) \, x_i \, 1_{x_i\leq k}\\ &= \int_{\mathbb{R}^m_+} \Tr(\rho P(d\vec{x})) \, x_i \, 1_{x_i\leq k} \, , \end{align} 
and by the monotone convergence theorem 
\begin{align} \Tr(\rho Q_{ik} A_i Q_{ik}) &\stackrel{k \to \infty}{\longrightarrow}\int_{\mathbb{R}^m_+} \Tr(\rho
P(d\vec{x})) \, x_i\\
&=\Tr(\rho^{1/2} A_i \rho^{1/2}) \, .
\end{align}
Now since $0\leq Q_{ik} A_i Q_{ik} \leq A_i$ and $\rho_n \in K$ by hypothesis, we have that \begin{equation} 0\leq \Tr(\rho_n Q_{ik} A_i Q_{ik}) \leq \Tr(\rho_n A_i) \leq b_i \, .
\end{equation}
Therefore, also $\Tr(\rho^{1/2} A_i \rho^{1/2}) \in [0,b_i]$. In particular, $\Tr |\rho A_i| <\infty$ as in Footnote \ref{Footnote1}, so $\rho A_i$ lies in the trace class, and $\Tr(\rho A_i) = \Tr(\rho^{1/2} A_i
\rho^{1/2}) \in [0,b_i]$, which is what we wanted to show.
This implies that $\rho \in K$, completing the proof that $K$ is closed.

Now we show that $K$ is totally bounded: 
For a given $\epsilon>0$ and every $i\in\{1,\ldots,m\}$ define $n_{i,\epsilon} \coloneqq \min\left( n \in \mathbb{N} \left| 4m\frac{b_i}{n} < \epsilon^2 \right. \right)$ and, with $P$ again the unique PVM jointly diagonalizing to the commuting self-adjoint operators $A_1, \hdots, A_m$,
\begin{equation}
    P_\epsilon \coloneqq P\left(\prod_{i=1}^m [ 0, n_{i,\epsilon}] \right),
\end{equation} which is of finite rank by hypothesis.\\
The Markov inequality states that for a non-negative random variable $X$ and $C>0$,
\begin{equation}
    \mathds{P}(X\geq C)\leq \frac{\mathds{E}(X)}{C}.
\end{equation}
Fix $\rho \in K$ and let the random variables $X_1,\hdots, X_m$ have joint probability distribution $\Tr(\rho P(d\Vec{x}))$. Then $\mathds{E}(X_i)=\Tr(\rho A_i)\leq b_i$, and the Markov inequality for $X_i$ and $C\coloneqq n_{i,\epsilon}> 4m \frac{b_i}{\epsilon^2} $ yields that
\begin{equation}
\begin{split}
    \mathds{P}(X_i \geq n_{i,\epsilon})&\leq \frac{\Tr(\rho A_i)}{n_{i,\epsilon}}< \frac{b_i \epsilon^2}{4m b_i}=\frac{\epsilon^2}{4m}.
\end{split}
\end{equation}
Thus,
\begin{equation}
\label{Markov inequality applied}
\begin{split}
     \Tr(\rho P_\epsilon)&= \mathds{P}(\forall i:\: X_i\leq n_{i,\epsilon}) \\
     &=1-\mathds{P}(\exists i:\: X_i> n_{i,\epsilon})\\
     &\geq 1- \sum_{i=1}^m \mathds{P}(X_i > n_{i,\epsilon})\geq 1-\frac{\epsilon^2}{4}
\end{split}
\end{equation}

For the given $\epsilon >0$ we choose $F_\epsilon$ to be the space of operators $\sigma \in T(\Hilbert)$ with 
\begin{equation}
    P_\epsilon \sigma = \sigma P_\epsilon = \sigma.
\end{equation}
Equivalently, $F_\epsilon$ is the space of finite-rank operators $\sigma$ with $\text{range}(\sigma) \subseteq \text{range}(P_\epsilon) $ and $\text{kernel}(P_\epsilon) \subseteq \text{kernel}(\sigma)$.\\
Or, in terms of any orthonormal basis $\{\chi_i\}$ of $\Hilbert$ in which $P_\epsilon$ is diagonal, the matrix elements $\bra{\chi_i}\sigma\ket{\chi_j}$ are non-zero only when $\ket{\chi_i},\ket{\chi_j} \in \text{range}(P_\epsilon)$.\\
We know that $\dim(F_\epsilon)\leq \dim(\text{range}(P_\epsilon))^2 < \infty$. We take $L_\epsilon$ to be the closed unit ball (in the trace norm) in $F_\epsilon$.

We will consider for an arbitrary ${\rho} \in K$ the element of $x\in F_\epsilon$ which matches the ${\rho}$ in all its finitely many non-zero matrix entries relative to $\{\chi_i\}$, that is, $x=P_\epsilon{\rho}P_\epsilon$.

Note that 
\begin{equation}
    \begin{split}
        \norm{x}_1&=\norm{P_\epsilon \rho P_\epsilon}_1 \leq \norm{P_\epsilon}_{Op} \norm{\rho}_1 \norm{P_\epsilon}_{Op} =1
    \end{split}
\end{equation}
by the Hölder inequality, so $x\in L_\epsilon$. It now suffices to show that
\begin{equation}
    \norm{P_\epsilon \rho P_\epsilon-\rho}_1 \leq \epsilon 
\end{equation}
for every $\rho \in K$, and that is what the remainder of this proof is about.

Let $\{\ket{\Psi_k}\}$ be an orthonormal eigenbasis of $\rho \in \Hilbert$ and $p_k$ the eigenvalue of $ \ket{\Psi_k}$, so
\begin{equation}
    \rho=\sum_k p_k \ket{\Psi_k} \bra{\Psi_k}.
\end{equation}
Then
\begin{equation}
    \begin{split}
         ||P_\epsilon{\rho}P_\epsilon-{\rho}||_1 &=\norm{\sum_k p_k (P_\epsilon\ket{\Psi_k}\bra{\Psi_k}P_\epsilon- \ket{\Psi_k}\bra{\Psi_k})}_1 \\
        &\leq \sum_k p_k \bigg|\bigg|{ P_\epsilon\ket{\Psi_k}\bra{\Psi_k}P_\epsilon -\ket{\Psi_k}\bra{\Psi_k}}\bigg|\bigg|_1 \; .
    \end{split}
\end{equation}

\begin{lem}
The trace norm of the self-adjoint operator
\begin{equation}
    O_k \coloneqq P_\epsilon\ket{\Psi_k}\bra{\Psi_k}P_\epsilon -\ket{\Psi_k}\bra{\Psi_k}
\end{equation}
can be bounded by
    \begin{equation}
    \label{inequalityO_k}
        \norm{O_k}_1 \leq 2 \norm{(\mathds{1}-P_\epsilon)\ket{\Psi_k}}.
    \end{equation}
\end{lem}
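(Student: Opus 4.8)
The plan is to reduce the claim to two rank-one estimates via a telescoping decomposition, thereby avoiding any explicit diagonalization. Write $\ket{\psi}\coloneqq\ket{\Psi_k}$, a unit vector, and $P\coloneqq P_\epsilon$, an orthogonal projection, so that $\norm{P}_{Op}\leq 1$. First I would rewrite $O_k$ as
\begin{equation}
O_k = P\ket{\psi}\bra{\psi}P - \ket{\psi}\bra{\psi} = P\ket{\psi}\bra{\psi}(P-\mathds{1}) + (P-\mathds{1})\ket{\psi}\bra{\psi},
\end{equation}
which one verifies by expanding the right-hand side and cancelling the common term $P\ket{\psi}\bra{\psi}$. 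The point of this splitting is that each summand is a rank-one operator, and for a rank-one operator $\ket{u}\bra{v}$ one has the elementary identity $\norm{\ket{u}\bra{v}}_1 = \norm{u}\,\norm{v}$.

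Next I would apply the triangle inequality for the trace norm and estimate the two summands separately. Since $P-\mathds{1}$ is self-adjoint, the bra $\bra{\psi}(P-\mathds{1})$ is the adjoint of the ket $(P-\mathds{1})\ket{\psi} = -(\mathds{1}-P)\ket{\psi}$, so both pieces are rank-one operators with $(\mathds{1}-P)\ket{\psi}$ appearing on one side. Using $\norm{P\ket{\psi}}\leq\norm{\psi}=1$ then gives
\begin{align}
\norm{P\ket{\psi}\bra{\psi}(P-\mathds{1})}_1 &= \norm{P\ket{\psi}}\,\norm{(\mathds{1}-P)\ket{\psi}} \leq \norm{(\mathds{1}-P)\ket{\psi}}, \\
\norm{(P-\mathds{1})\ket{\psi}\bra{\psi}}_1 &= \norm{(\mathds{1}-P)\ket{\psi}}\,\norm{\psi} = \norm{(\mathds{1}-P)\ket{\psi}},
\end{align}
and summing the two bounds yields precisely \eqref{inequalityO_k}. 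All of this extends verbatim to infinite dimension, since it uses only the rank-one trace-norm identity and the contraction property of $P$.

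An alternative, more computational route would note that $O_k$ is supported on the at most two-dimensional subspace spanned by $P\ket{\psi}$ and $(\mathds{1}-P)\ket{\psi}$; writing $O_k$ as a $2\times 2$ self-adjoint matrix in an orthonormal basis of that subspace and setting $\beta\coloneqq\norm{(\mathds{1}-P)\ket{\psi}}^2$, one finds the eigenvalues have product $-\beta(1-\beta)\leq 0$ and hence $\norm{O_k}_1 = \sqrt{\beta(4-3\beta)}\leq 2\sqrt{\beta}$, again giving \eqref{inequalityO_k}. I would prefer the telescoping argument for its brevity. The only genuinely delicate point in either approach is the adjoint bookkeeping, namely tracking that $\bra{\psi}(P-\mathds{1})$ corresponds to the ket $(P-\mathds{1})\ket{\psi}$ via self-adjointness of $P$, so that each summand is correctly recognized as a rank-one operator whose trace norm factorizes; beyond that I anticipate no serious obstacle.
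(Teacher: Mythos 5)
Your proof is correct, but it takes a genuinely different route from the paper's. You split $O_k$ by telescoping,
\begin{equation}
O_k = P\ket{\psi}\bra{\psi}(P-\mathds{1}) + (P-\mathds{1})\ket{\psi}\bra{\psi},
\end{equation}
and then apply the triangle inequality together with the rank-one identity $\norm{\ket{u}\bra{v}}_1=\norm{u}\,\norm{v}$ and the contraction bound $\norm{P\ket{\psi}}\leq 1$; the algebra and the adjoint bookkeeping ($\bra{\psi}(P-\mathds{1})$ being the bra of $(P-\mathds{1})\ket{\psi}$ by self-adjointness) both check out. The paper instead restricts $O_k$ to the two-dimensional subspace spanned by the normalized vectors $P_\epsilon\ket{\Psi_k}/\norm{P_\epsilon\ket{\Psi_k}}$ and $(\mathds{1}-P_\epsilon)\ket{\Psi_k}/\norm{(\mathds{1}-P_\epsilon)\ket{\Psi_k}}$, writes it as an explicit $2\times 2$ matrix, computes its eigenvalues, and obtains the exact value $\norm{O_k}_1=\beta\sqrt{4-3\beta^2}$ with $\beta=\norm{(\mathds{1}-P_\epsilon)\ket{\Psi_k}}$, which it then bounds by $2\beta$ via a monotonicity argument --- this is precisely your ``alternative computational route,'' and it requires separate treatment of the degenerate cases $P_\epsilon\ket{\Psi_k}=0$ and $(\mathds{1}-P_\epsilon)\ket{\Psi_k}=0$. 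Your telescoping argument buys brevity and robustness: no case distinctions (the rank-one identity holds even when one of the vectors vanishes), no eigenvalue computation, and it transparently works in infinite dimension. What the paper's computation buys is the sharp value of the trace norm, which shows the constant $2$ is attained only asymptotically as $\beta\to 0$; your bound $(1+\norm{P\ket{\psi}})\,\norm{(\mathds{1}-P)\ket{\psi}}$ is marginally weaker pointwise but entirely sufficient for the way the lemma is used in the compactness argument.
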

\begin{proof}
    If $P_\epsilon \ket{\Psi_k}=0$, then $O_k=-\ket{\Psi_k}\bra{\Psi_k}$, $\norm{O_k}_1=1$, and $\norm{(\mathds{1}-P_\epsilon)\ket{\Psi_k}}=1$, so \eqref{inequalityO_k} holds.

    If $(\mathds{1}-P_\epsilon)\ket{\Psi_k}=0$, then $P_\epsilon\ket{\Psi_k}=\ket{\Psi_k}$ and therefore $O_k=0$, so \eqref{inequalityO_k} holds as well. 

    Now we assume that neither $P_\epsilon \ket{\Psi_k}$ nor $(\mathds{1}-P_\epsilon)\ket{\Psi_k}$ vanish. For ease of notation, define $\ket{\Psi_k}\eqqcolon \ket{\Psi}$. We also define 
\begin{equation}
\begin{split}
     \ket{a}\coloneqq \frac{P_\epsilon \ket{\Psi}}{\norm{P_\epsilon \ket{\Psi}}} \quad & \text{and} \quad \ket{b}\coloneqq \frac{(\mathds{1}-P_\epsilon) \ket{\Psi}}{\norm{(\mathds{1}-P_\epsilon) \ket{\Psi}}}\\
    \alpha \coloneqq \norm{P_\epsilon \ket{\Psi}} > 0  \quad & \text{and} \quad \beta \coloneqq \sqrt{1-\alpha^2} > 0.
\end{split}
\end{equation}
Notice that $\ket{a}$ and $\ket{b}$ are orthogonal and both unit vectors. Furthermore $\ket{a} \in \text{range}(P_\epsilon)$ and $\ket{b} \perp \text{range}(P_\epsilon)$. \\
We use them to make the following decomposition:
\begin{equation}
\begin{split}
     \ket{\Psi}&= P_\epsilon \ket{\Psi} + (\mathds{1}-P_\epsilon) \ket{\Psi}\\
     & = \alpha \ket{a}+ \beta \ket{b}.
\end{split}
\end{equation}

The self adjoint rank-two operator $O_k= P_\epsilon\ket{\Psi}\bra{\Psi}P_\epsilon -\ket{\Psi}\bra{\Psi}$ , whose trace norm we want to evaluate, can now be represented as the following $2\times 2$ matrix acting on the basis $\{\ket{a},\ket{b}\}$ and as the zero operator everywhere else.

\begin{equation}
    P_\epsilon\ket{\Psi}\bra{\Psi}P_\epsilon -\ket{\Psi}\bra{\Psi}=\begin{pmatrix}
 0 & -\beta^* \alpha\\
 -\beta \alpha^* & -|\beta|^2
\end{pmatrix} \eqqcolon A
\end{equation}
Because the operator $O_k$ is self adjoint and because the complement of its kernel is the two dimensional subspace spanned by $\{\ket{a},\ket{b}\}$, it is block diagonal with one block being the above matrix $A$ and the other block being the zero operator.
Therefore its trace norm is equal to the trace norm of $A$.

The matrix $A$ has the eigenvalues:
\begin{equation}
    \lambda_{1/2}= -\frac{\beta}{2} \left( \beta \pm \sqrt{4-3\beta^2} \right).
\end{equation}
It is easy to see that for $\beta \in [0,1]$ the $\lambda_1<0$ as the expression in the bracket is always larger than 0. 
Since $\frac{\partial}{\partial \beta} \frac{\lambda_2}{\beta}= - \frac{1}{2} \left(1+\frac{3\beta}{\sqrt{4-3\beta^2}}\right)  <0$ and $\frac{\lambda_2}{\beta} \big|_{\beta = 1}= 0$ we can conclude by monotonicity that $\frac{\lambda_2}{\beta} \geq 0$ for $\beta \in [0,1]$.

This then gives us the trace norm:
\begin{equation}
\begin{split}
    \norm{A}_1 &= |\lambda_1|+|\lambda_2|\\
    &= \beta \sqrt{4-3\beta^2}
\end{split}
\end{equation}
Since $\sqrt{4-3\beta^2}\big|_{\beta=1}=1$ and $\sqrt{4-3\beta^2}\big|_{\beta=0}=2$ and $\frac{\partial}{\partial \beta} \sqrt{4-3\beta^2} =-\frac{3\beta}{\sqrt{4-3\beta^2}} <0$ we can conclude by monotonicity that
\begin{equation}
\begin{split}
    \norm{A}_1 \leq 2 \beta
\end{split}
\end{equation}
for $\beta \in [0,1]$.
Therefore, 
\begin{equation}
    \norm{O_k}_1 \leq 2 \beta
\end{equation}
holds, which completes the proof of the Lemma.
\end{proof}

Getting back to the main estimation, we arrive now at
\begin{equation}
    \norm{P_\epsilon \rho P_\epsilon- \rho }_1\leq \sum_k \, p_k \, 2 \, ||(1-P_\epsilon)\ket{\Psi_k}|| = 2 \sum_k \, p_k  \, \sqrt{||(1-P_\epsilon)\ket{\Psi_k}||^2 }
\end{equation}
We can now use the Jensen inequality in the form $\mathds{E}(\sqrt{X})\leq \sqrt{\mathds{E}(X)}$ for any random variable $X \geq 0$:
\begin{equation}
    \norm{P_\epsilon \rho P_\epsilon - \rho}_1\leq  2 \sqrt{\sum_k p_k   ||(1-P_\epsilon)\ket{\Psi_k}||^2 } = 2 \sqrt{\Tr((1-P_\epsilon){\rho})}
\end{equation}
By equation \eqref{Markov inequality applied} we can estimate the difference of the traces:
\begin{equation}
     \norm{P_\epsilon\rho P_\epsilon -\rho}_1\leq 2 \sqrt{\Tr(\rho)-\Tr(P_\epsilon \rho)} \leq 2 \sqrt{\frac{\epsilon^2}{4}} = \epsilon
\end{equation}
Therefore, the set $K$ is totally bounded.\\
Thus, $K$ is a convex compact set on a locally convex topological vector space. 
The quantum channel $S$ is a continuous function mapping from $K$ into itself.
Schauder-Tychonoffs fixed-point theorem can be applied. There exists a fixed point to the infinite-dimensional quantum channel $S$ in the set $K \subseteq D(\Hilbert)$, which completes the proof of Theorem \ref{thm1}.
\end{proof}

\bibliographystyle{plain} 
\bibliography{Bibliothek} 

\begin{thebibliography}{10}

\bibitem{arntzenius_time_2002}
F.~Arntzenius and T.~Maudlin.
\newblock Time {Travel} and {Modern} {Physics}.
\newblock {\em Royal Institute of Philosophy Supplement}, 50:169--200, 2002.

\bibitem{ben-el-mechaieh_elementary_2022}
H.~Ben-El-Mechaieh and Y.~A. Mechaiekh.
\newblock An elementary proof of the {Brouwer}’s fixed point theorem.
\newblock {\em Arabian Journal of Mathematics}, 11(2):179--188, 2022.

\bibitem{bonnor_exact_2005}
W.~B. Bonnor and B.~R. Steadman.
\newblock Exact solutions of the {Einstein}-{Maxwell} equations with closed
  timelike curves.
\newblock {\em General Relativity and Gravitation}, 37(11):1833--1844, 2005.

\bibitem{breuer_theory_2009}
H.-P. Breuer and F.~Petruccione.
\newblock {\em The theory of open quantum systems}.
\newblock Clarendon Press, Oxford, 1st edition, 2009.

\bibitem{carter_Kerr}
B.~Carter.
\newblock {Global Structure of the Kerr Family of Gravitational Fields}.
\newblock {\em Physical Review}, 174:1559--1571, 1968.

\bibitem{Chen25}
E.~K. Chen.
\newblock Density matrix realism.
\newblock In M.~E. Cuffaro and S.~Hartmann, editors, {\em The Open Systems
  View: Physics, Metaphysics and Methodology}. Oxford University Press, 2025.
\newblock preprint available at \url{http://arxiv.org/abs/2405.01025}.

\bibitem{deutsch_quantum_1991}
D.~Deutsch.
\newblock Quantum mechanics near closed timelike lines.
\newblock {\em Physical Review D}, 44(10):3197--3217, 1991.

\bibitem{durr_role_2005}
D.~Dürr, S.~Goldstein, R.~Tumulka, and N.~Zanghì.
\newblock {On the Role of Density Matrices in Bohmian Mechanics}.
\newblock {\em Foundations of Physics}, 35:449--467, 2005.
\newblock Preprint available at \url{http://arxiv.org/abs/quant-ph/0311127}.

\bibitem{friedman_cauchy_1990}
J.~Friedman, M.~S. Morris, I.~D. Novikov, F.~Echeverria, G.~Klinkhammer, K.~S.
  Thorne, and U.~Yurtsever.
\newblock Cauchy problem in spacetimes with closed timelike curves.
\newblock {\em Physical Review D}, 42(6):1915--1930, 1990.

\bibitem{godel_example_1949}
K.~Gödel.
\newblock An {Example} of a {New} {Type} of {Cosmological} {Solutions} of
  {Einstein}'s {Field} {Equations} of {Gravitation}.
\newblock {\em Reviews of Modern Physics}, 21(3):447--450, 1949.

\bibitem{kim_vacuum_1991}
S.-W. Kim and K.~S. Thorne.
\newblock Do vacuum fluctuations prevent the creation of closed timelike
  curves?
\newblock {\em Physical Review D}, 43(12):3929--3947, 1991.

\bibitem{konigsberger_analysis_2004}
K.~Königsberger.
\newblock {\em Analysis 1}.
\newblock Springer, Berlin, Heidelberg, 2004.

\bibitem{munkres_topology_2014}
J.~R. Munkres.
\newblock {\em Topology}.
\newblock Pearson, Harlow, 2. ed. edition, 2014.

\bibitem{rastegin_relations_2012}
A.~E. Rastegin.
\newblock Relations for certain symmetric norms and anti-norms before and after
  partial trace.
\newblock {\em Journal of Statistical Physics}, 148(6):1040--1053, 2012.
\newblock \url{http://arxiv.org/abs/1202.3853}.

\bibitem{reed_methods_1980}
M.~Reed and B.~Simon.
\newblock {\em Methods of {Modern} {Mathematical} {Physics} {Volume} 1:
  {Functional} {Analysis}}.
\newblock Academic Press, San Diego, Calif., 2nd edition, 1980.

\bibitem{rudin_functional_1991}
W.~Rudin.
\newblock {\em Functional analysis}.
\newblock McGraw-Hill, New York, 2nd edition, 1991.

\bibitem{thorne_closed_1993}
K.~S. Thorne.
\newblock Closed timelike curves.
\newblock In {\em General Relativity and Gravitation 1992}, pages 295--315.
  Institute of Physics Publishing; R. J. Gleiser, C. N. Kozameh, O. M. Moreschi
  (ed.s), 1993.

\bibitem{wheeler_classical_1949}
J.~A. Wheeler and R.~P. Feynman.
\newblock Classical {Electrodynamics} in {Terms} of {Direct} {Interparticle}
  {Action}.
\newblock {\em Reviews of Modern Physics}, 21(3):425--433, 1949.

\bibitem{noauthor_schatten_2023}
{Wikipedia}.
\newblock Schatten norm.
\newblock Accessed October 24, 2024. Available at:
  \url{https://en.wikipedia.org/w/index.php?title=Schatten_norm&oldid=1179805365}.

\bibitem{wolf_quantum_2012}
M.~M. Wolf.
\newblock {Quantum channels and operations: guided tour}.
\newblock {Lecture notes}, 2012.
\newblock \url{https://mediatum.ub.tum.de/doc/1701036/1701036.pdf}.

\end{thebibliography}

\end{document}